\documentclass[sn-mathphys,Numbered]{sn-jnl}

\usepackage{multirow,xcolor,textcomp,manyfoot,booktabs,algpseudocode,listings}%
\usepackage[title]{appendix}%
\usepackage{amsmath,amsthm,amsfonts,amssymb,geometry,subcaption}
\usepackage{graphicx,float,algorithm,algorithmicx,multirow,mathrsfs}
\usepackage{braket}

\theoremstyle{thmstyleone}%
\newtheorem{theorem}{Theorem}%  
\newtheorem{proposition}[theorem]{Proposition}

\theoremstyle{thmstyletwo}%

\theoremstyle{thmstylethree}%

\begin{document}
	
	\title{Classically Realizable Incompatibility}
	
	\author[1]{\fnm{Songyi} \sur{Liu}}\email{liusongyi@buaa.edu.cn}
	
	\author*[1]{\fnm{Yongjun} \sur{Wang}}\email{wangyj@buaa.edu.cn}
	
	\author[1]{\fnm{Baoshan} \sur{Wang}}\email{bwang@buaa.edu.cn}
	
	\author[1]{\fnm{Yunyi} \sur{Jia}}\email{by2309005@buaa.edu.cn}
	
	\author[1]{\fnm{Chang} \sur{He}}\email{hechang@buaa.edu.cn}
	
	\affil*[1]{\orgdiv{School of Mathematical Sciences}, \orgname{Beihang University}, \orgaddress{ \city{Beijing}, \postcode{100191}, \country{China}}}
	
	\abstract{Incompatibility constitutes a fundamental aspect of quantum mechanics. However, not every quantum observable non-classical property arises from incompatibility, nor can all quantum scenarios be fully captured by incompatibility alone. Within the framework of partial Boolean algebra (pBA), we research the structural properties of incompatibility scenarios. We introduce a unified method to realize any incompatibility scenario via a classical game, and the construction is extendable to any scenario embeddable into a Boolean algebra. The exclusivity graph offers a precise characterization of incompatibility scenarios. We prove that every exclusivity graph is the atom graph of an exclusive pBA, which is embedded into a Boolean algebra. These results provide a necessary condition for exclusivity graphs and a sufficient condition for atom graphs. }
	
	\keywords{Incompatibility, Partial Boolean algebra, Quantum scenario, Contextuality}
	
	\maketitle
	
	\section*{Declarations}
	
	\noindent\textbf{Competing interests.}
	The authors have no relevant financial or non-financial interests to disclose.
	
	\vspace{10pt}
	
	\noindent\textbf{Acknowledgments.}
	The work was supported by National Natural Science Foundation of China (Grant No. 12371016, 11871083) and National Key R\&D Program of China (Grant No. 2020YFE0204200).
	
	\section{Introduction}
	
	In quantum mechanics, observables such as position and momentum are incompatible, whereas all classical observables are compatible. Consequently, incompatibility is regarded as a crucial source of nonclassical phenomena. It has been established that incompatibility is necessary for Bell nonlocality \cite{Brunner2014Bell,Aravinda2015The} and for Kochen–Specker contextuality \cite{Budroni2022Kochen,Kurzynski2012Entropic}. Modern approaches to contextuality rely heavily on the compatibility relations among observables \cite{Abramsky2011sheaf,Kurzynski2012Entropic,Fritz2013Entropic,Adan2014Graph,Acin2015A}, and incompatibility plays a key role in various quantum information applications \cite{Colloquium2023Guhne}.
	
	Nevertheless, incompatibility alone is insufficient to guarantee nonlocality or contextuality \cite{Bene2018Measurement}. Specific conditions, such as the presence of a 4‑cycle induced subgraph of the compatibility graph and a contextual quantum correlation, are required \cite{Xu2019Necessary}. Within Spekkens’ framework of contextuality for non‑ideal measurements \cite{Spekkens2005Contextuality}, it has even been shown that incompatibility is not always necessary for contextuality \cite{Contextuality2023Selby,Khrennikov2021Can}. These observations raise a fundamental question: to what extent does incompatibility contribute to nonclassicality? This question can be divided into two aspects: first, which measurement scenarios can be described purely by compatibility relations; and second, whether incompatibility itself can be realized in classical experiments. Here, “nonclassicality” refers to the inability of a system to be described by classical probability theory or hidden‑variable theories, which is also the notion that nonlocality and contextuality were originally introduced to capture \cite{Bell1964On,Kochen1967The}.
	
	As an example of uncertainty relations \cite{Heisenberg1927Uber}, consider Hermitian operators $A$ and $B$ and an arbitrary quantum state $\ket{\psi}$. The Robertson uncertainty relation \cite{Robertson1934An} states:
	\begin{equation}
		\Delta(A)\Delta(B) \geq \frac{1}{2} \left| \bra{\psi} [A,B] \ket{\psi} \right|,
	\end{equation}
	where $\Delta(A)$ is the standard deviation of $A$, and $[A,B] = AB - BA$ is the commutator. Hence, if $A$ and $B$ do not commute, they cannot be simultaneously determined with arbitrary precision.
	
	We refer to ``incompatibility scenarios'' as measurement scenarios arising solely from compatibility relations. The uncertainty relation induces such a scenario, whose logical structure is described below. For simplicity, let observables $A$ and $B$ be dichotomic, each with outcomes $0$ and $1$; i.e., a $(1,2,2)$ scenario. This scenario has four atom events: $A=0$, $A=1$, $B=0$, and $B=1$, denoted as $a$, $\bar{a}$, $b$, and $\bar{b}$, respectively, where $\bar{a}$ denotes the negation of $a$. These atom events are the atoms of the corresponding partial Boolean algebra (pBA) $\mathcal{A}_{(1,2,2)}$ \cite{Kochen1967The,Van2012Noncommutativity,Budroni2023Classical}, illustrated in Figure~\ref{fig-uncertainty}.
	
	\begin{figure}[H]
		\centering
		\includegraphics[width=0.4\linewidth]{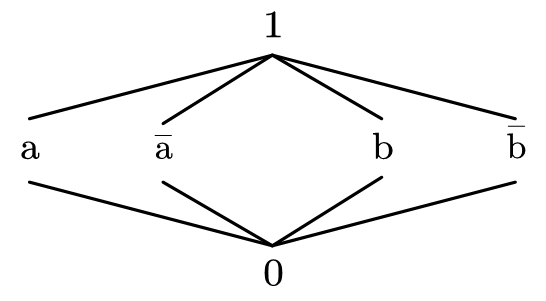}
		\caption{The partial Boolean algebra $\mathcal{A}_{(1,2,2)}$ for two incompatible dichotomic observables $A$ and $B$.}
		\label{fig-uncertainty}
	\end{figure}
	
	A pBA $\mathcal{A}$ is a generalized Boolean algebra equipped with a compatibility relation $\odot \subseteq \mathcal{A} \times \mathcal{A}$, where binary operations are defined only on compatible elements. The algebra of projectors $\mathbb{P}(\mathcal{H})$ on a Hilbert space $\mathcal{H}$, under the commutativity relation, forms a pBA. Thus, pBA captures the logical structure of observable quantum events.
	
	An exclusivity graph \cite{Adan2014Graph} offers an advanced framework for contextuality in any incompatibility scenario. Let $X = \{A_1, \dots, A_n\}$ be the set of observables to measure, and $\mathcal{M}$ be the family of compatible observable sets. A set $C \in \mathcal{M}$ is called a context. A context is maximal if it is not contained in any other context. An atom event $e$ is an assignment $e: C \to O$ (or $e\in O^C$), where $O$ is the outcome set and $C$ is a maximal context. Exclusivity relations arise from compatibility relations: two atom events are exclusive if a joint measurement (within the same scenario) can distinguish them. The exclusivity graph is a simple graph with vertices representing atom events and edges representing exclusivity relations. Vertices serve as the atoms of the scenario. In the $(1,2,2)$ scenario, these are $A=0$, $A=1$, $B=0$, and $B=1$, as shown in Figure~\ref{fig-122}.
	
	\begin{figure}[H]
		\centering
		\includegraphics[width=0.3\linewidth]{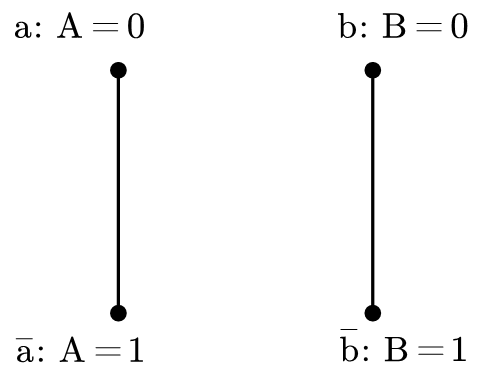}
		\caption{Exclusivity graph of the $(1,2,2)$ scenario.}
		\label{fig-122}
	\end{figure}
	
	Uncertainty relations are counterintuitive from a classical viewpoint, but they alone do not suffice to rule out hidden-variable theories in the historical development of contextuality. This is because incompatibility between observables can be classically realized. Any incompatibility scenario can be modeled by a classical measurement scenario with identical logical structure, as detailed in subsequent sections. Mathematically, any finite incompatibility scenario embeds into a Boolean algebra, implying that non-classical properties of it must depend on contextual states.
	
\section{Classical realization of two-observables incompatibility}\label{sec-122}
	
	Classically, all observables are compatible, but in practice, incompatibility can arise in certain experiments due to incomplete information or operational constraints. Such incompatibility reflects technological limitations, not non-classical properties, yet it provides a method to realize quantum incompatibility with classical scenarios.
	
	To classically realize the $(1,2,2)$ incompatibility scenario, we embed $\mathcal{A}_{(1,2,2)}$ into a Boolean algebra. We list hidden variables that assign values to observables $A$ and $B$. There are four such assignments:
	\begin{equation}
		\begin{split}
			\lambda_1: A,B=0,0&\quad\lambda_2: A,B=0,1 \\
			\lambda_3: A,B=1,0&\quad\lambda_4: A,B=1,1.
		\end{split}
	\end{equation}
	The set $s_d(\mathcal{A}_{(1,2,2)})=\{\lambda_1,\lambda_2,\lambda_3,\lambda_4\}$ is the deterministic states of $\mathcal{A}_{(1,2,2)}$, acting as the hidden-variable space. Since $\mathcal{A}_{(1,2,2)}$ is a finite exclusive pBA, it embeds into a Boolean algebra if and only if it embeds into $\mathcal{A}_{(1,2,2)}^c=\mathcal{P}(s_d(\mathcal{A}_{(1,2,2)}))$, the power-set algebra of its deterministic states \cite{Liu2026The}. Thus, we only need to consider the embedding:
	\begin{equation}
		\begin{split}
			i_{\mathcal{A}_{(1,2,2)}}:\mathcal{A}_{(1,2,2)}&\to \mathcal{A}_{(1,2,2)}^c\\
			0&\mapsto \emptyset \\
			1&\mapsto s_d(\mathcal{A}_{(1,2,2)})\\
			a &\mapsto \{\lambda_1,\lambda_2\}\\
			\bar{a} &\mapsto \{\lambda_3,\lambda_4\}\\
			b &\mapsto \{\lambda_1,\lambda_3\}\\
			\bar{b} &\mapsto \{\lambda_2,\lambda_4\}
		\end{split}
	\end{equation}
	
	It is straightforward to verify that $i_{\mathcal{A}_{(1,2,2)}}$ is an embedding. $\mathcal{A}_{(1,2,2)}^c$ is the minimal Boolean algebra into which $\mathcal{A}_{(1,2,2)}$ embeds \cite{Liu2026The}. This naturally leads to designing a classical experiment (with constraints) to realize the incompatibility scenario $\mathcal{A}_{(1,2,2)}$.

	Any measurement scenario can be formulated as a game \cite{Abramsky2024Combining}. Corresponding to the four deterministic states $\lambda_1,\lambda_2,\lambda_3,\lambda_4$, we design a tetrahedral dice with unmarked faces. Let observables $A$ and $B$ represent the color and material of the faces: red $(A=0)$ or blue $(A=1)$ for color, and gold $(B=0)$ or silver $(B=1)$ for material. The dice is configured as follows:
	\begin{table}[h]
		\centering
		\begin{tabular}{c|c c}
			Face & Color & Material  \\
			\hline
			$\lambda_1$ & red & gold \\
			$\lambda_2$ & red & silver \\
			$\lambda_3$ & blue & gold \\
			$\lambda_4$ & blue & silver \\
		\end{tabular}
	\end{table}
	
	In the game, a referee holds the dice. To measure a property of the downward face, a player first specifies the observable (color or material). The referee then rolls the dice and reports the outcome for that face. The rule enforces that only one observable is measured, and the dice is re-rolled each round, simulating incompatibility. Thus, the incompatibility scenario $\mathcal{A}_{(1,2,2)}$ is classically realized.
	
	Moreover, all probability distributions on $\mathcal{A}_{(1,2,2)}$, forming the state set $s(\mathcal{A}_{(1,2,2)})$, are convex combinations of the deterministic states $\lambda_1,\lambda_2,\lambda_3,\lambda_4$:
	\begin{equation}
		\mathrm{conv}(s_d(\mathcal{A}_{(1,2,2)}))=s(\mathcal{A}_{(1,2,2)}).
	\end{equation}
	This implies that any outcome distributions for $A$ and $B$, including those induced by a quantum state $\ket{\psi}$, can be described classically \cite{Adan2014Graph,Liu2026The}. By adjusting the dice's shape and weight, the game can realize any statistics of the $(1,2,2)$ scenario, showing that uncertainty relations for two observables are classically realizable.
	
	Note that if $\mathcal{A}_{(1,2,2)}$ is treated as an orthomodular lattice in standard quantum logic \cite{Birkhoff1936The}, where binary operations are defined for all elements (even incompatible ones), it cannot embed into Boolean algebras. In standard quantum logic, $a\lor b=1$ and $a\land b=0$. For any homomorphism $f$ from $\mathcal{A}_{(1,2,2)}$ to a Boolean algebra, $f(a)\lor f(b)=1$ and $f(a)\land f(b)=0$ imply $f(b)=\bar{f(a)}=f(\bar{a})$, so $f$ is not an embedding. This likely stems from the lack of direct physical interpretation for operations between incompatible events \cite{Foulis2006Quantum, Kochen2015Reconstruction}. Hence, for physically meaningful events, the pBA framework is more appropriate.
   
   \section{Classically realized incompatibility scenario}\label{sec-general}
   
   This section formalizes the construction from Section~\ref{sec-122} to prove that any finite incompatibility scenario embeds into a Boolean algebra and can thus be realized by a classical game.
   
   Consider the general case. A finite incompatibility scenario consists of:
   \begin{enumerate}
   	\item A set of observables: $X = \{A_1, \dots, A_n\}$,
   	\item A family of contexts: $\mathcal{M} \subseteq \mathcal{P}(X)$,
   	\item An outcome set: $O$,
   \end{enumerate}
   satisfying:
   \begin{itemize}
   	\item If $C \in \mathcal{M}$ and $C' \subseteq C$, then $C' \in \mathcal{M}$.
   \end{itemize}
   
   For simplicity, assume all observables share the same outcome set $O$; this does not affect subsequent conclusions. For any context $C \in \mathcal{M}$, let $O^C$ be its corresponding event set. Define $\mathcal{B}_C$ as the Boolean algebra generated by $O^C$, i.e., $\mathcal{B}_C = \mathcal{P}(O^C)$. If $C' \subseteq C$, then $O^{C'}$ has a marginal representation in $\mathcal{B}_C$:
   \begin{equation}
   	e \in O^{C'} \mapsto \bigvee\{f \in O^C : f|_{C'} = e\}.
   \end{equation}
   Thus, $\mathcal{B}_{C'}$ can be treated as a subalgebra of $\mathcal{B}_C$. We therefore focus on maximal contexts. Let $\mathcal{F}$ be the family of Boolean algebras $\mathcal{B}_C$ and all their Boolean subalgebras, where $C$ is maximal in $\mathcal{M}$. It is straightforward to show that Boolean operations coincide on $\mathcal{B}_{C_1} \cap \mathcal{B}_{C_2}$ for any pair of maximal contexts $C_1$ and $C_2$; indeed, $\mathcal{B}_{C_1} \cap \mathcal{B}_{C_2} = \mathcal{B}_{C_1 \cap C_2}$. Consequently, $\mathcal{F}$ forms a partial Boolean algebra (pBA) with underlying set $\mathcal{B}_{\mathcal{M}} = \bigcup_{C \in \mathcal{M}} \mathcal{B}_C$, where two elements are compatible if and only if they belong to the same Boolean algebra.
   
   In quantum logic and Kochen–Specker contextuality \cite{Van2012Noncommutativity,Abramsky2020The}, pBAs are often assumed to satisfy Specker's principle: pairwise compatibility implies global compatibility \cite{Specker1960Die}. Mathematically, any set of pairwise compatible elements must lie within a Boolean algebra. This definition suits ideal measurements. However, the pBA definition adopted here is the general version \cite{Budroni2023Classical}, which does not require Specker's principle. Thus, our conclusions also hold for more general measurements.
   
   Now we embed the pBA $\mathcal{B}_{\mathcal{M}}$ into a Boolean algebra. The construction is natural. Let $\mathcal{B}_X = \mathcal{P}(O^X)$ be the Boolean algebra of joint measurement outcomes for all observables. For any $C \in \mathcal{M}$, there is a marginal representation of $O^C$ in $\mathcal{B}_X$:
	\begin{equation}
		e \in O^C \mapsto \bigvee\{f \in O^X : f|_{C} = e\}.
	\end{equation}
	This induces an embedding $i_C: \mathcal{B}_C \to \mathcal{B}_X$ for each maximal context $C \in \mathcal{M}$. Define a map $i_{\mathcal{M}}: \mathcal{B}_{\mathcal{M}} \to \mathcal{B}_X$ by:
	\begin{equation}
		\label{eq-embed}
		\begin{split}
			i_{\mathcal{M}}: \mathcal{B}_{\mathcal{M}} &\to \mathcal{B}_X,\\
			e \in \mathcal{B}_C &\mapsto i_C(e).
		\end{split}
	\end{equation}
	For an element $e \in \mathcal{B}_{C_1} \cap \mathcal{B}_{C_2}$, we have $i_{C_1}(e) = i_{C_2}(e) = \bigvee\{f \in O^X : f|_{C_1 \cap C_2} = e\}$, so $i_{\mathcal{M}}$ is well-defined. Since each $i_C$ is a homomorphism, $i_{\mathcal{M}}$ is also a homomorphism. If $e_1 \neq e_2$ are compatible, they belong to some maximal Boolean subalgebra $\mathcal{B}_C$, and $i_{\mathcal{M}}(e_1) \neq i_{\mathcal{M}}(e_2)$ because $i_C$ is an embedding. If $e_1 \neq e_2$ are incompatible, they correspond to distinct observable sets, ensuring $i_{\mathcal{M}}(e_1) \neq i_{\mathcal{M}}(e_2)$. Thus, $i_{\mathcal{M}}$ is an embedding from $\mathcal{B}_{\mathcal{M}}$ into $\mathcal{B}_X$.
	
	In summary:
	
	\begin{proposition}\label{prop-1}
		Any finite incompatibility scenario can be embedded into a classical scenario.
	\end{proposition}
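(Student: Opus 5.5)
The plan is to make precise the construction just described: embed the pBA $\mathcal{B}_{\mathcal{M}}$ into the power-set algebra $\mathcal{B}_X=\mathcal{P}(O^X)$ of joint outcome assignments. The role of the hidden-variable space is played by $O^X$, exactly as the four dice faces did in Section~\ref{sec-122}. The embedding is the map $i_{\mathcal{M}}$ of \eqref{eq-embed}. Realizing it as a classical game then follows the dice construction: one face for each $f\in O^X$, the player names a context $C\in\mathcal{M}$, and the referee rolls and reports $f|_C$.

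The first step is to fix what an element of $\mathcal{B}_{\mathcal{M}}$ is. I will identify each element with a pair $(C,S)$, where $C$ is the minimal context on which it depends and $S\subseteq O^C$. The element is then the cylinder event ``$f|_C\in S$''. Under this reading, $i_C(S)=\{f\in O^X: f|_C\in S\}$.

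The second step is to check that $i_{\mathcal{M}}$ is well defined. This amounts to the statement already noted, that $\mathcal{B}_{C_1}\cap\mathcal{B}_{C_2}=\mathcal{B}_{C_1\cap C_2}$, together with the fact that pulling a cylinder set back through the restrictions $O^X\to O^{C_1}\to O^{C_1\cap C_2}$ gives the same subset of $O^X$. Here I use that $O$ is nonempty, so restriction maps are surjective.

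The third step is to check that $i_{\mathcal{M}}$ is a pBA homomorphism. Each $i_C$ is the inverse image of the surjective restriction map $O^X\to O^C$. Inverse images preserve $\emptyset$, the whole set, complement, union and intersection. Since compatible elements lie in a common $\mathcal{B}_C$, the operations are preserved whenever they are defined.

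The main obstacle is injectivity across different contexts. The argument given above, that incompatible elements ``correspond to distinct observable sets'', is too quick. Cylinder sets over different coordinate sets can coincide, for instance both trivial elements $0$ and $1$. This is exactly why the identification $\mathcal{B}_{C_1}\cap\mathcal{B}_{C_2}=\mathcal{B}_{C_1\cap C_2}$ has to be made first. With it in hand, I argue as follows. Suppose $i_{C_1}(S_1)=i_{C_2}(S_2)$ is a subset $T\subseteq O^X$. Then $T$ depends only on coordinates in $C_1$ and also only on coordinates in $C_2$. Take $f,g$ that agree on $C_1\cap C_2$. Form a hybrid $h$ that agrees with $f$ on $C_1$ and with $g$ on $C_2\setminus C_1$. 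Then $f\in T\iff h\in T$, using $C_1$-dependence, and $h\in T\iff g\in T$, using $C_2$-dependence. Hence $T$ is a cylinder over $C_1\cap C_2$. So both elements are the same element of $\mathcal{B}_{C_1\cap C_2}$, which gives injectivity.

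Finally, I will check that $i_{\mathcal{M}}$ reflects compatibility only in the sense a pBA embedding requires, namely injectivity together with homomorphism. Since $X$ and $O$ are finite, $\mathcal{B}_X$ is a finite Boolean algebra, that is, a classical scenario. This completes the plan for Proposition~\ref{prop-1}.
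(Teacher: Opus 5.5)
Your proposal is correct and follows the paper's own argument: the same map $i_{\mathcal{M}}$ into $\mathcal{B}_X=\mathcal{P}(O^X)$, checked for well-definedness, the homomorphism property, and injectivity. Your hybrid-assignment step (a subset of $O^X$ that is a cylinder over both $C_1$ and $C_2$ is a cylinder over $C_1\cap C_2$) supplies the justification the paper omits when it says incompatible elements ``correspond to distinct observable sets.''
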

	
	Similar to Section~\ref{sec-122}, a classical game can be designed to realize any finite incompatibility scenario. A referee holds a random system with hidden-variable space $O^X$ (e.g., a dice with $|O^X|$ faces). In each round, the player inputs an observable $A \in X$, and the referee runs the system to output the outcome of $A$. Thus, incompatibility alone does not yield non-classical properties; these arise from contextual states on the scenario.
	
	\section{Quantum scenarios beyond incompatibility}
	
	Proposition~\ref{prop-1} shows that not all quantum scenarios are describable by incompatibility alone. The first class of exceptions are scenarios that cannot embed into Boolean algebras, such as Kochen-Specker scenarios, which admit no global truth-value assignments. Typical examples include Kochen-Specker vector sets \cite{Kochen1967The,Cabello1997Bell} and the Peres-Mermin square \cite{Peres1991Two,Mermin1990Simple}.
	
	\begin{figure}[H]
		\centering
		\includegraphics[width=0.7\linewidth]{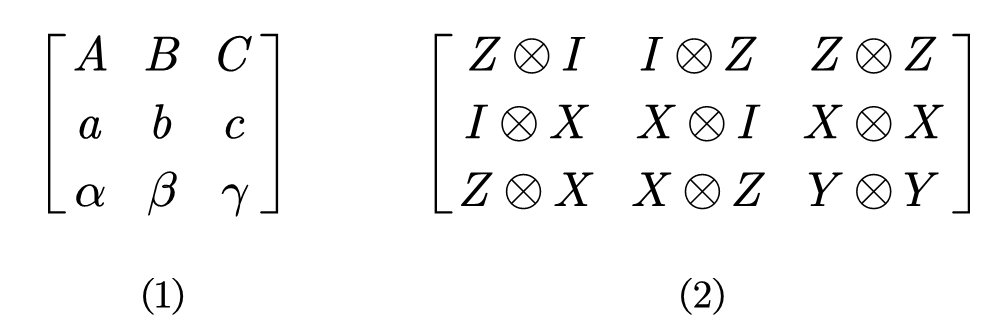}
		\caption{(1) The Peres-Mermin square, where each row or column forms a context; (2) A quantum realization of the Peres-Mermin square, which is a Kochen-Specker scenario, with $X, Y, Z$ as Pauli operators on qubits.}
		\label{fig-PM}
	\end{figure}
	
	The Peres-Mermin square consists of nine observables with outcomes $\pm 1$. Their compatibility relations are shown in Figure~\ref{fig-PM}(1), but this structure alone is insufficient to define a Kochen-Specker scenario. Additional conditions on these observables are required, such as:
	\begin{equation}
		\label{eq-PM}
		\begin{split}
			&ABC = 1,\quad abc = 1,\quad \alpha\beta\gamma = 1,\\
			&Aa\alpha = 1,\quad Bb\beta = 1,\quad Cc\gamma = -1.
		\end{split}
	\end{equation}
	The contextuality of the Peres-Mermin scenario stems from these conditions. They are realized by a quantum scenario in Figure~\ref{fig-PM}(2), which admits no classical realization. If the nine observables could be assigned simultaneous values, the products $\{ABC, abc, \alpha\beta\gamma, Aa\alpha, Bb\beta, Cc\gamma\}$ would contain an even number of $+1$ outcomes, contradicting the conditions.
	
	Conditions like equations~(\ref{eq-PM}) provide structural information about the scenario, including which events are impossible and which are equivalent. This information cannot be captured by an incompatibility scenario alone. In fact, we can present a simple quantum scenario that is classically realizable yet cannot be described by an incompatibility scenario.
	
	Define five vectors:
	\begin{align*}
		\ket{a}&= (1,0,0)^T, \quad \ket{b} = (0,1,0)^T, \quad \ket{c} = (0,0,1)^T, \\
		\ket{d}&= (1/\sqrt{2},1/\sqrt{2},0)^T, \quad \ket{e} = (1/\sqrt{2},-1/\sqrt{2},0)^T.
	\end{align*}
	In quantum mechanics, these five vectors (projectors) represent five quantum events, denoted $a,b,c,d,e$. They generate a quantum scenario $\mathcal{Q}$, whose atom graph is shown in Figure~\ref{fig-5events}.
	\begin{figure}[H]
		\centering
		\includegraphics[width=0.3\linewidth]{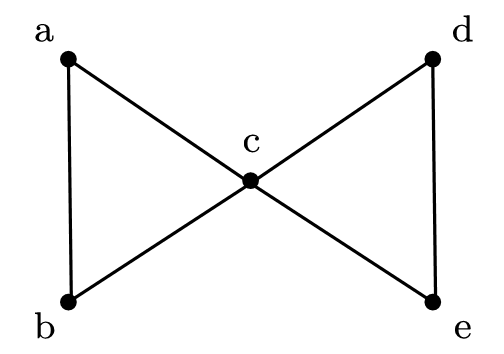}
		\caption{Atom graph of the five-atom scenario $\mathcal{Q}$. Two vertices are adjacent if the corresponding events are compatible and exclusive.}
		\label{fig-5events}
	\end{figure}
	
	Specifically, the scenario consists of five atom events $\{a,b,c,d,e\}$. The context $\{a,b,c\}$ corresponds to the outcomes of a three-valued observable $A$, and $\{c,d,e\}$ to those of another three-valued observable $B$, where $A$ and $B$ are incompatible. If $c$ represents, say, $A=0$ and $B=0$, these are equivalent because they correspond to the same event $c$. In summary, Figure~\ref{fig-5events} depicts the exclusivity and equivalence relations among atom events, which suffice to capture the logical structure of any exclusive partial Boolean algebra \cite{Liu2025Atom}.
	
	Figure~\ref{fig-5events} cannot be described by an incompatibility scenario. It contains only two maximal contexts of three events, which can only arise from two incompatible three-valued observables. The corresponding exclusivity graph would then be isomorphic to that in Figure~\ref{fig-2observables}. Without additional conditions, one cannot determine whether $A=0$ is equivalent to $B=0$.
	\begin{figure}[H]
		\centering
		\includegraphics[width=0.4\linewidth]{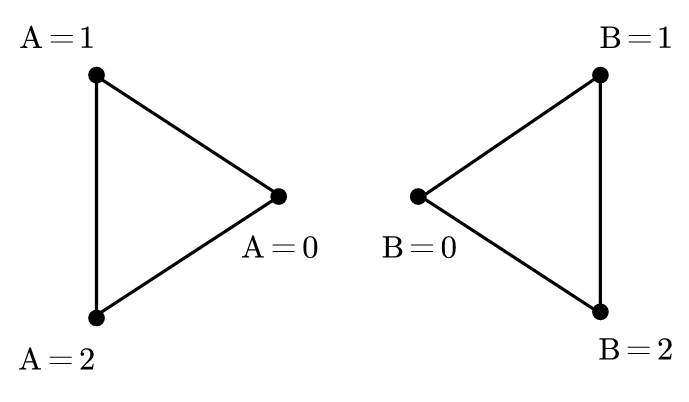}
		\caption{Exclusivity graph for two incompatible three-valued observables $\{A,B\}$.}
		\label{fig-2observables}
	\end{figure}
	
	A classical game can realize the scenario in Figure~\ref{fig-5events}. The construction from Section~\ref{sec-general} applies to any scenario embeddable into a Boolean algebra. Here, since $\mathcal{Q}$ admits five deterministic states, we design a five-sided dice labeled $\lambda_1,\dots,\lambda_5$. Each face has a color (red, yellow, or blue) and a material (gold, silver, or copper), as shown below:
	\begin{table}[h]
		\centering
		\begin{tabular}{c|c c}
			Face & Color & Material  \\
			\hline
			$\lambda_1$ & red & gold \\
			$\lambda_2$ & red & silver \\
			$\lambda_3$ & yellow & gold \\
			$\lambda_4$ & yellow & silver \\
			$\lambda_5$ & blue & copper
		\end{tabular}
	\end{table}
	
	A referee holds this dice. Following the same procedure as in Section~\ref{sec-general}, we obtain a classical realization of scenario $\mathcal{Q}$.
	
	\section{Exclusivity Graph and Logical Exclusivity Principle}
	
	This section examines a precise graphical representation for incompatibility scenarios: the exclusivity graph \cite{Adan2014Graph}. Given a scenario $(X, \mathcal{M}, O)$, two events $e_1 \in O^{C_1}$ and $e_2 \in O^{C_2}$ are exclusive if there exists an observable $A \in C_1 \cap C_2$ such that $e_1(A) \neq e_2(A)$. The graph with atom events as vertices and exclusivity relations as edges is the scenario's exclusivity graph.
	
	Consider a simple example with $\mathcal{M} = \{\{A,B\}, \{B,C\}\}$ and $O = \{0,1\}$. Its compatibility graph and corresponding exclusivity graph are shown in Figure~\ref{fig-EG}. This structure is a key component in the exclusivity graphs of n-cycle scenarios, such as Clauser-Horne-Shimony-Holt (CHSH) \cite{Clauser1969Proposed} and Klyachko-Can-Binicio\u{g}lu-Shumovsky (KCBS) \cite{Klyachko2008Simple}.
	
	\begin{figure}[H]
		\centering
		\includegraphics[width=0.8\linewidth]{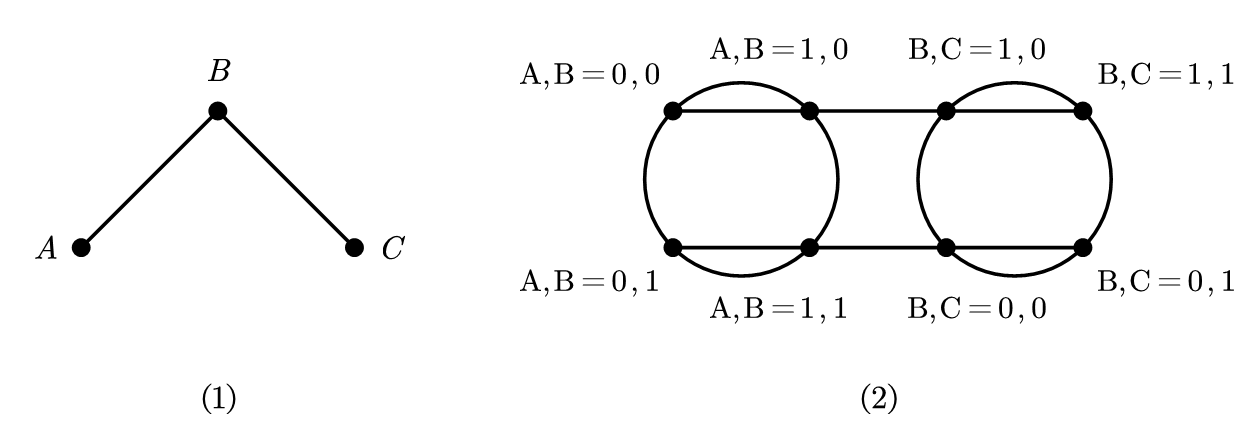}
		\caption{(1) Compatibility graph for $\mathcal{M} = \{\{A,B\}, \{B,C\}\}$; (2) Corresponding exclusivity graph, drawn as a hypergraph where each line or circle denotes a maximal clique.}
		\label{fig-EG}
	\end{figure}
	
	In Figure~\ref{fig-EG}(2), besides the cliques for the maximal contexts $\{A,B\}$ and $\{B,C\}$, exclusivity induces two additional cliques (shown as lines). They are introduced to depict the probability exclusivity principle (PEP) \cite{Fritz2013Local}: the sum of probabilities for pairwise exclusive events cannot exceed $1$. Thus, the exclusivity graph offers a more precise description of an incompatibility scenario.
	
	In quantum mechanics, exclusive events must be compatible, known as the logical exclusivity principle (LEP) \cite{Abramsky2020The}. In the example of Figure~\ref{fig-EG}, although the joint measurements $\{A,B\}$ and $\{B,C\}$ are incompatible, the coarse-grained measurements for events ``$A,B=i,0$" and ``$B,C=1,j$" (with $i,j \in \{0,1\}$) are compatible; equivalently, their corresponding operator product is zero. It is proven that LEP is stronger than PEP and is equivalent to transitivity \cite{Abramsky2020The}.
	
	Formally, for a partial Boolean algebra (pBA) $\mathcal{A}$, elements $a, b \in \mathcal{A}$ are exclusive if there exists $c \in \mathcal{A}$ such that $a \leq c$ and $b \leq \bar{c}$. If every pair of exclusive elements is compatible, then $\mathcal{A}$ is said to satisfy LEP, and is called an exclusive partial Boolean algebra (epBA).
	
	We research whether the exclusivity graph captures the logical exclusivity principle (LEP) and its ability to describe measurement scenarios. Let $\mathcal{G}$ be the exclusivity graph of an incompatibility scenario $(X,\mathcal{M},O)$. Our goal is to characterize the partial Boolean algebra (pBA) generated by $\mathcal{G}$. Constructing a pBA is generally challenging. Fortunately, note that the exclusivity graph introduces no new events. As shown in Section~\ref{sec-general}, all events in $\mathcal{B}_{\mathcal{M}}$ can be embedded into a Boolean algebra $\mathcal{B}_X$. This leads to:
	
	\begin{proposition}
		\label{prop-2}
		Every exclusivity graph $\mathcal{G}$ generates a partial Boolean algebra $\mathcal{B}_{\mathcal{G}}$, and $\mathcal{B}_{\mathcal{G}}$ can be embedded into a Boolean algebra.
	\end{proposition}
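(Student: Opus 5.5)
The plan is to realize $\mathcal{B}_{\mathcal{G}}$ concretely inside the Boolean algebra $\mathcal{B}_X=\mathcal{P}(O^X)$, using the embedding $i_{\mathcal{M}}$ of Proposition~\ref{prop-1}. This gives both parts of the statement at once: $\mathcal{B}_{\mathcal{G}}$ will be a pBA by construction, and it will embed into a Boolean algebra. Let $\mathcal{G}$ be the exclusivity graph of $(X,\mathcal{M},O)$, and let $V$ be its vertex set, the atom events $e\in O^C$ with $C$ maximal. Each vertex $e$ is sent to the cylinder set
\[
\hat e=i_{\mathcal{M}}(e)=\{f\in O^X : f|_C=e\}.
\]
The first step is to show that adjacency in $\mathcal{G}$ is exactly disjointness of these cylinders. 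If $e_1\in O^{C_1}$ and $e_2\in O^{C_2}$ disagree on some $A\in C_1\cap C_2$, then clearly $\hat e_1\cap\hat e_2=\emptyset$. Conversely, if they agree on $C_1\cap C_2$, they glue to a single assignment on $C_1\cup C_2$, which extends to some $f\in O^X$ lying in both cylinders. So the exclusivity relation is faithfully represented in $\mathcal{B}_X$.

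Next, define $\mathcal{B}_{\mathcal{G}}$. For every clique $K$ of $\mathcal{G}$, the sets $\hat e$ with $e\in K$ are pairwise disjoint in $\mathcal{B}_X$. Let $\mathcal{B}_K$ be the Boolean subalgebra of $\mathcal{B}_X$ generated by $\{\hat e: e\in K\}$. Take $\mathcal{B}_{\mathcal{G}}=\bigcup_K \mathcal{B}_K$, with the family consisting of all the $\mathcal{B}_K$ and their Boolean subalgebras, and declare two elements compatible iff they lie in a common member of this family. Cliques coming from maximal contexts reproduce $\mathcal{B}_C$, and the extra cliques (the lines in Figure~\ref{fig-EG}) add new compatible pairs. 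This is exactly the pBA "generated by $\mathcal{G}$".

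The pBA axioms then hold for a simple reason. Every member of the family is a Boolean subalgebra of the single Boolean algebra $\mathcal{B}_X$, and it inherits the operations of $\mathcal{B}_X$. Hence the operations automatically coincide on intersections $\mathcal{B}_{K_1}\cap\mathcal{B}_{K_2}$, which is the condition that was checked by hand in Section~\ref{sec-general}. The inclusion $\mathcal{B}_{\mathcal{G}}\hookrightarrow\mathcal{B}_X$ is then injective and preserves all operations defined on compatible pairs, so it is an embedding. Because $\mathcal{B}_{\mathcal{M}}\subseteq\mathcal{B}_{\mathcal{G}}$ via $i_{\mathcal{M}}$, nothing is lost relative to Proposition~\ref{prop-1}.

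I expect the main obstacle to be justifying that this realization is the correct one. An abstract definition of "the pBA generated by $\mathcal{G}$" might identify or separate elements differently from the concrete construction in $\mathcal{B}_X$. To address this, I would verify that the atoms of $\mathcal{B}_{\mathcal{G}}$ are exactly the vertices $V$: distinct vertices give distinct nonempty cylinders, because incompatible events involve different observable sets, as in Proposition~\ref{prop-1}. I would also verify that compatible-and-exclusive pairs among atoms are exactly the edges, which is the adjacency claim from the first step. Any abstract pBA generated by $\mathcal{G}$ then maps onto this concrete one. Since the concrete one already sits inside a Boolean algebra, the embedding conclusion holds.
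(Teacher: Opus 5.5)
Everything before your final paragraph is the paper's proof. You send each vertex to its cylinder $i_{\mathcal{M}}(e)\subseteq O^X$ and take the Boolean subalgebras of $\mathcal{B}_X$ generated by the images of cliques, together with their subalgebras. You then observe that such a family is automatically a pBA sitting inside $\mathcal{B}_X$. Using all cliques instead of maximal ones gives the same family, since the algebra generated by a subclique is a subalgebra of the one generated by any maximal clique containing it. Your check that adjacency in $\mathcal{G}$ is equivalent to disjointness of cylinders is correct, and the paper leaves it implicit.

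Your final paragraph, however, is unnecessary and contains two errors, so you should drop it.

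\textbf{First error.} The atoms of $\mathcal{B}_{\mathcal{G}}$ are not in general exactly the vertices. Your justification (distinct vertices give distinct cylinders) only shows injectivity on $V$. Take $X=\{A,B,C\}$ and $O=\{0,1\}$, with maximal contexts $\{A,B\},\{B,C\},\{A,C\}$, and write points of $O^X$ as strings of values of $A,B,C$.
\begin{itemize}
\item The events $(A,B)=(0,0)$, $(B,C)=(1,0)$, $(A,C)=(1,1)$ are pairwise exclusive, with cylinders $\{000,001\}$, $\{010,110\}$, $\{101,111\}$.
\item They form a maximal clique. A further vertex adjacent to all three would need its two-point cylinder to be $\{011,100\}$, but these two points disagree on every observable.
\item The Boolean algebra they generate contains the complement $\{011,100\}$.
\item Every cylinder has two points and the cylinders of a clique are disjoint, so every element of $\mathcal{B}_{\mathcal{G}}$ has even cardinality.
\end{itemize}
Hence $\{011,100\}$ is an atom of $\mathcal{B}_{\mathcal{G}}$ that is not a vertex. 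In particular, the atom graph of this construction can be strictly larger than $\mathcal{G}$.

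\textbf{Second error.} A homomorphism from an abstract pBA \emph{onto} a Boolean-embeddable one does not make the former embeddable, because it may identify elements.

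Neither issue arises in the paper, which simply defines $\mathcal{B}_{\mathcal{G}}$ as this concrete partial Boolean subalgebra of $\mathcal{B}_X$. Read that way, the part of your proposal before the final paragraph is already a complete proof of the statement.
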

	\begin{proof}
		Using embedding \eqref{eq-embed} from Section~\ref{sec-general}, for each maximal clique $C$ of $\mathcal{G}$, let $\mathcal{B}_C$ be the Boolean subalgebra of $\mathcal{B}_X$ generated by $i_{\mathcal{M}}(C)$. Take $\mathcal{F}_{\mathcal{G}}$ as the family of all such $\mathcal{B}_C$ and their Boolean subalgebras. Then $\mathcal{F}_{\mathcal{G}}$ forms a pBA $\mathcal{B}_{\mathcal{G}}$, which is in fact a partial Boolean subalgebra of $\mathcal{B}_X$.
	\end{proof}
	
	Proposition~\ref{prop-2} shows that the pBA $\mathcal{B}_{\mathcal{G}}$ generated by the exclusivity graph captures the logical structure more precisely than the original incompatibility scenario $\mathcal{B}_{\mathcal{M}}$. Clearly, $\mathcal{B}_{\mathcal{M}} \subseteq \mathcal{B}_{\mathcal{G}}$. Nevertheless, even when described by exclusivity graphs, incompatibility scenarios remain embeddable into classical scenarios.
	
	We now determine whether $\mathcal{B}_{\mathcal{G}}$ is an exclusive partial Boolean algebra (epBA). An epBA describes a no-signaling scenario enabling normal logical reasoning. Determining whether a given simple graph is the atom graph of an epBA is a significant and often challenging problem. The following proposition resolves this for exclusivity graphs.
	
	\begin{proposition}
		$\mathcal{B}_{\mathcal{G}}$ satisfies the LEP. Therefore, every exclusivity graph is the atom graph of a finite epBA.
	\end{proposition}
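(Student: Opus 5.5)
We work entirely inside $\mathcal{B}_X=\mathcal{P}(O^X)$, where by Proposition~\ref{prop-2} the pBA $\mathcal{B}_{\mathcal{G}}$ sits as a partial Boolean subalgebra. Each atom event $e\in O^C$ is identified with its cylinder set
\[
i_{\mathcal{M}}(e)=\{f\in O^X : f|_C=e\}.
\]
Exclusivity in $\mathcal{G}$ then has a concrete meaning: $e_1\in O^{C_1}$ and $e_2\in O^{C_2}$ are adjacent if and only if they disagree on some $A\in C_1\cap C_2$. Since two cylinders over partial assignments are disjoint exactly when the assignments disagree on a common coordinate, adjacency in $\mathcal{G}$ is equivalent to disjointness of the images in $\mathcal{B}_X$.

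\emph{Step 1: LEP for elements of $\mathcal{B}_{\mathcal{G}}$.} Every element $x$ of $\mathcal{B}_{\mathcal{G}}$ lies in some $\mathcal{B}_K$, with $K$ a maximal clique of $\mathcal{G}$. Because the atoms of $K$ are pairwise disjoint, $\mathcal{B}_K$ consists of the finite joins of atoms of $K$ together with their complement $1\setminus\bigvee K$. Now suppose $x\le c$ and $y\le\bar c$ for some $c\in\mathcal{B}_{\mathcal{G}}$; then $x$ and $y$ are disjoint subsets of $O^X$. We must show that $x$ and $y$ lie in a common $\mathcal{B}_{K''}$.

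The natural route is to write $x$ and $y$ as disjoint joins of atom events (cylinders), and note that every cylinder in $x$ is disjoint from every cylinder in $y$. Every atom of $x$ and every atom of $y$ is therefore adjacent in $\mathcal{G}$. The atoms of $x$ are already pairwise adjacent, as are those of $y$. Hence the set of atoms of $x$ and $y$ together is a clique, contained in some maximal clique $K''$, and both $x,y\in\mathcal{B}_{K''}$.

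\emph{Step 2: the main obstacle.} Step 1 assumes $x$ and $y$ are joins of atoms of $\mathcal{G}$, which fails for complements such as $1\setminus\bigvee K$. The difficulty is also not only about complements: $x\le c$ and $y\le\bar c$ within $\mathcal{B}_X$ does not immediately put $x,c$ in a common clique algebra unless compatibility of $x\le c$ is required. I will use the pBA convention that $x\le c$ means $x$ and $c$ are compatible with $x\wedge c=x$.

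To handle complements, the key lemma I would first try to prove is the following.
\begin{quote}
In each clique algebra $\mathcal{B}_K$ the complement $1\setminus\bigvee K$ is itself a join of atom events of $\mathcal{G}$.
\end{quote}
Equivalently, every maximal clique of $\mathcal{G}$ covers $O^X$, so $\bigvee K=1$ and each $\mathcal{B}_K$ is atomic with atoms exactly the vertices of $K$. To prove it, take $f\in O^X$ not covered by $K$. Its restriction $f|_C$, for a suitable maximal context $C$, is a vertex of $\mathcal{G}$. We want this vertex to be adjacent to all of $K$, which would contradict maximality of $K$. If that fails, I would instead enlarge the family of generators to the atoms of all $\mathcal{B}_K$ (the regions of $O^X$ cut out by $K$) and check that disjointness still implies joint membership in some clique algebra. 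This is where I expect the real work to lie.

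\emph{Step 3: conclusion.} Once the lemma is established, each $\mathcal{B}_K$ has atoms the vertices of $K$, so the atoms of $\mathcal{B}_{\mathcal{G}}$ are exactly the vertices of $\mathcal{G}$. Two atoms are compatible and exclusive iff they are disjoint, iff they are adjacent. Hence the atom graph of $\mathcal{B}_{\mathcal{G}}$ is $\mathcal{G}$, and $\mathcal{B}_{\mathcal{G}}$ is finite because $O^X$ is finite.
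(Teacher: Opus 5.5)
\textbf{Gap: the Step~2 lemma is false, so the proof cannot be completed.}
Your Step~1 is exactly the paper's entire proof. The paper writes every element of $\mathcal{B}_{\mathcal{G}}$ as a union of images of vertices of one maximal clique and then runs your clique argument, so it tacitly assumes what you isolate in Step~2. You are right that this is the crux. But the proposal leaves it open, and it cannot be closed as planned because the lemma is false.

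Take $X=\{A,B,C\}$, $O=\{0,1\}$ and maximal contexts $\{A,B\},\{B,C\},\{A,C\}$. This is admissible, since Specker's principle is not assumed. Write points of $O^X$ as value strings of $ABC$.
\begin{itemize}
\item $K_1=\{AB{=}00,\ BC{=}10,\ AC{=}11\}$ is a clique, and a direct check shows no other vertex is adjacent to all three, so it is maximal. Its cylinders $\{000,001\}$, $\{010,110\}$, $\{101,111\}$ miss $r_1=\{011,100\}$. So the adjacency you hope to find in Step~2 does not exist.
\item Likewise $K_2=\{BC{=}00,\ AC{=}01,\ AB{=}11\}$ is a maximal clique, with remainder $r_2=\{010,101\}$.
\end{itemize}
Your fallback fails too, and in fact LEP itself fails. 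Let $c=\{000,001,011,100\}$. It equals $AB{=}00\vee r_1\in\mathcal{B}_{K_1}$ and also $BC{=}00\vee AC{=}01\in\mathcal{B}_{K_2}$. Hence $r_1\le c$ in $\mathcal{B}_{K_1}$ and $r_2\le\bar c=AB{=}11\vee r_2$ in $\mathcal{B}_{K_2}$, so $r_1,r_2$ are exclusive. They are nevertheless incompatible. The atoms of any clique algebra are two-point cylinders plus at most one remainder. A two-point non-cylinder such as $r_1$ or $r_2$ can only be that remainder, and one clique cannot have both as its remainder. Moreover, every element of $\mathcal{B}_{\mathcal{G}}$ here has even size. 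So $r_1$ and $r_2$ are atoms of $\mathcal{B}_{\mathcal{G}}$ that are not vertices, and the atom graph of $\mathcal{B}_{\mathcal{G}}$ is not $\mathcal{G}$ either.

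\textbf{What is missing is a hypothesis, not a sharper argument.}
If every maximal clique of $\mathcal{G}$ covers $O^X$, your Steps~1 and~3 already give a complete proof. This holds, for example, for $n$-cycle scenarios with $n\ge 4$ (CHSH, KCBS). There a clique lives on two contexts sharing one observable $A_i$. A maximal clique then takes all events with $A_i\in S$ from one context and all events with $A_i\notin S$ from the other, so it covers $O^X$.

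Specker's principle alone does not supply this hypothesis. Take observables $a,b,c,d,e,g$, pairwise compatible except for the pairs $\{a,e\},\{b,d\},\{c,g\}$; the maximal contexts are then the eight triangles. The events $(a,b,g)=(1,1,0)$, $(a,c,d)=(0,1,1)$, $(b,c,e)=(0,0,1)$ and $(d,e,g)=(0,0,1)$ are pairwise exclusive. Yet the restriction of the all-zero assignment to any maximal context fails to be adjacent to at least one of them. So no maximal clique containing these four events covers $O^X$.
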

	\begin{proof}
		Since $\mathcal{B}_{\mathcal{G}}$ is a partial Boolean subalgebra of $\mathcal{B}_X$, its elements can be viewed as subsets of $O^X$.
		
		Consider two elements $e = \bigcup_{i \in I} a_i$ and $f = \bigcup_{j \in J} b_j$ in $\mathcal{B}_{\mathcal{G}}$, where $a_i \in i_{\mathcal{M}}(C_1)$, $b_j \in i_{\mathcal{M}}(C_2)$, and $C_1, C_2$ are maximal cliques of $\mathcal{G}$. If $e$ and $f$ are exclusive, they are disjoint. Hence, for any $i \in I$ and $j \in J$, $a_i$ and $b_j$ are exclusive. Furthermore, for distinct $i, i' \in I$ and $j, j' \in J$, $a_i$ is exclusive with $a_{i'}$, and $b_j$ is exclusive with $b_{j'}$. Therefore, all elements in $\{a_i, b_j\}_{i\in I, j\in J}$ are pairwise exclusive, implying their preimages under $i_{\mathcal{M}}$ belong to the same maximal clique of $\mathcal{G}$. Thus, $a_i$ and $b_j$ lie within the same maximal Boolean subalgebra of $\mathcal{B}_{\mathcal{G}}$, proving that $e$ and $f$ are compatible.
	\end{proof}
	
	For scenarios satisfying Specker's principle, such as ideal measurements, a crucial result establishes that any finite exclusive partial Boolean algebra (epBA) is uniquely determined by its atom graph \cite{Liu2025Atom}. Specifically, two finite epBAs $\mathcal{A}_1$ and $\mathcal{A}_2$ are isomorphic if and only if their atom graphs are isomorphic, yielding a one-to-one correspondence. However, for the general scenario, which does not assume Specker's principle, it remains unclear whether this correspondence still holds.
	
	\section{Conclusion and Outlook}\label{sec-conclusion}
	
	This work investigates the extent to which incompatibility contributes to nonclassicality within the framework of partial Boolean algebras (pBAs). We demonstrate that any incompatibility scenario can be classically realized. A unified method is provided to realize such scenarios via a classical game, which can be generalized to any scenario embeddable into a Boolean algebra. Consequently, incompatibility alone does not yield nonclassical properties without contextual states. Incompatibility scenarios cannot describe those scenarios resisting Boolean embedding, such as Kochen-Specker scenarios, nor can they fully capture certain quantum scenarios, even those that are Boolean-embeddable.
	
	The exclusivity graph offers a precise and advanced description of incompatibility scenarios, playing a key role in contextuality research. We prove that every exclusivity graph is the atom graph of a pBA satisfying the logical exclusivity principle (LEP), and that the scenario it generates can be embedded into a Boolean algebra. These results establish a necessary condition for which scenarios can be described by exclusivity graphs, and a sufficient condition for identifying atom graphs among simple graphs.
	
	While we provide a necessary condition for exclusivity graphs, a full characterization, i.e., a necessary and sufficient condition for a scenario to be describable by an exclusivity graph, remains open. Similarly, a convenient necessary and sufficient condition for recognizing atom graphs is still lacking. Finally, the potential nonclassicality of an incompatibility scenario stems from its associated set of contextual states. Characterizing the contextual, and specifically quantum, state sets for a given incompatibility scenario remains a significant challenge.
	
	\bibliography{math}

\end{document}